\definecolor{urlcolor}{HTML}{000088} % цвет гиперссылок
\definecolor{linkcolor}{HTML}{008800} % цвет ссылок
\definecolor{citecolor}{HTML}{008800} % цвет цитирований
\newtheorem{proposition}{Proposition}
\theoremstyle{definition}
\newtheorem{example}{Example}
\newtheorem{theorem}{Theorem}
\newcommand{\I}{\mathcal{I}}
\newcommand{\Id}{\mathcal{I}^\dagger}
\begin{document}

\title{Quantum state tomography via sequential uses of the same informationally incomplete measuring apparatus}

\author{\firstname{V.\,A.}~\surname{Zhuravlev}}
 \affiliation{Moscow Institute of Physics
and Technology, Institutskii Per. 9, Dolgoprudny 141700, Russia}

\author{\firstname{S.\,N.}~\surname{Filippov}}
 \affiliation{Moscow Institute of Physics
and Technology, Institutskii Per. 9, Dolgoprudny 141700, Russia}
 \affiliation{Steklov
Mathematical Institute of Russian Academy of Sciences, Gubkina St.
8, Moscow 119991, Russia}

\begin {abstract}{State of a $d$-dimensional quantum system can only be inferred by performing an informationally complete measurement with $m\geqslant d^2$ outcomes. However, an experimentally accessible measurement can be informationally incomplete. Here we show that a \emph{single} informationally \emph{incomplete} measuring apparatus is still able to provide all the information about the quantum system if applied several times in a row. We derive a necessary and sufficient condition for such a measuring apparatus and give illustrative examples for qubits, qutrits, general $d$-level systems, and composite systems of $n$ qubits, where such a measuring apparatus exists. We show that projective measurements and L\"{u}ders measurements with 2 outcomes are useless in the considered scenario.}
\end{abstract}

\keywords{quantum measurement model, quantum instrument, dual map, measurement tree diagram, quantum state tomography}

  \maketitle

\section{Introduction}

The results of measurements on quantum systems are naturally probabilistic~\cite{holevo-st}. The distribution of measurement outcomes is defined by a positive operator-valued measure (POVM). There are two distinctive classes of POVMs: (i) informationally complete and (ii) informationaly incomplete. The former class establishes a one-to-one correspondence between the system density operator and the distribution of measurement outcomes~\cite{heinosaari-ziman}. The latter class fails in providing such an injective function: the distribution of measurement outcomes does not define the system density operator uniquely. 

Despite a POVM characterizes the statistics of outcomes for a particular measurement device, it does not describe the system transformation caused by the measurement. The measurement-induced system transformation is described by a quantum instrument that assigns a completely positive trace-nonincreasing map to each measurement outcome~~\cite{davies-lewis-1970,holevo-book,heinosaari-ziman}. This is the formalism of quantum instruments that enables one to deal with the system state after the measurement and subject it to further transformations, e.g., subsequent measurements. Sequential measurements are exactly the scenario we study in this paper. This scenario is exceptionally productive when none of the measurements in sequence is informationally complete because a multidisribution of outcomes for all measurements may, nevertheless, contain all the desired information~\cite{carmeli-2011,carmeli-2012,kalev-2012,lorenzo-2013}. A typical scenario is to perform a so-called weak measurement and then apply a projective measurement depending on the outcome observed~\cite{lundeen-2012}. Refs.~\cite{thekkadath-2016,calderaro-2018} report the state reconstruction schemes with several uses of indirect measurements before the projective one.

The goal of this paper is to consider an experimentally relevant situation of a \emph{single} measurement device available, which is informationally incomplete. The authors of Ref.~\cite{haapasalo-2016} noticed that repeated measurements with a single device could be used to correct the inherent noise of an unsharp observable. It turns out, however, that several applications of the same device can also result in the informationally complete statistics of outcomes. We provide specific restrictions on the informationally incomplete measuring apparatus under which the statistics of outcomes for $n$ uses of the apparatus enables a precise quantum state tomography. This research line continues a discussion of the quantum state dynamics under repeated measurements started in Refs.~\cite{haapasalo-2016,luchnikov-2017}.

The paper is organized as follows. In Section~\ref{section-instruments}, we provide a brief overview of quantum instruments and POVMs. In Section~\ref{section-general}, we derive a necessary and sufficient condition for the informational completeness of $n$ uses of the same informationally incomplete measuring apparatus. In Section~\ref{section-qubits}, we present an informationally incomplete measuring apparatus with 2 outcomes such that 2 sequential uses of this apparatus enable a precise state reconstruction. In Section~\ref{section-qudits}, we generalize the results of Section~\ref{section-qubits} to the case of $d$-dimensional quantum systems. In Section~\ref{section-n-qubits}, multiqubit systems are considered. In Section~\ref{section-conclusions}, brief conclusions are given. 

\section{Quantum instruments and non-destructive measurements} \label{section-instruments}

In this section, we briefly review density operators, POVMs, and quantum instruments as some requisite notions for further analysis of sequential quantum measurements. 

We consider non-trivial finite dimensional Hilbert spaces ${\cal H}_d$, $d = {\rm dim}{\cal H}_d > 1$. The state of a quantum system is represented by a Hermitian positive-semidefinite operator $\rho$ with the unit trace. Set of all states is denoted by $\mathcal{S}({\cal H}_d)$. The set $\mathcal{S}({\cal H}_d)$ is convex.

An effect $\mathsf{E}: \mathcal{S}({\cal H}_d) \to [0, 1]$ is an affine mapping from $\mathcal{S}({\cal H}_d)$ to $[0, 1]$ such that
\begin{equation*}
\mathsf{E}\left(\sum_i \lambda_i \rho_i\right) = \sum_i \lambda_i \mathsf{E}(\rho_i), \quad \lambda_i \geqslant 0, \quad \sum_i \lambda_i = 1,
\end{equation*}

\noindent which defines the mapping for a mixture of quantum states. Every effect has an associated positive-semidefinite bounded operator $E$ on ${\cal H}_d$ such that $O\leq E \leq I$, where $I$ is the identity operator and $O$ is the zero operator. The relation 
\begin{equation} \label{Born-rule}
\mathsf{E}(\rho) = \tr \left( \rho E \right), \quad \forall \rho \in \mathcal{S}({\cal H}_d)
\end{equation}

\noindent uniquely defines the operator $E$, which we will refer to as an effect too. By ${\cal E}({\cal H}_d)$ denote a set of effects for a $d$-dimensional quantum system.

Let $\Omega$ be a set of elementary outcomes in some physical experiment on a $d$-dimensional quantum system and $\mathcal{F}$ be a $\sigma$-algebra of events. A positive operator-valued measure (POVM) is a mapping ${\sf A}: {\cal F} \to {\cal E}({\cal H}_d)$ such that ${\sf A}(\emptyset) = O$, ${\sf A}(\Omega) = I$, and ${\sf A}(\cup_i X_i) = \sum_i {\sf A}(X_i)$ for any sequence $\{X_i\}$ of disjoint sets in ${\cal F}$~\cite{heinosaari-ziman}. In what follows, we consider a finite set $\Omega=\{x_k\}_{k=1}^m$, which corresponds to an $m$-outcome measurement. For a subset $X \subset \Omega$ we have ${\sf A}(X) = \sum_{x_k \in X} A(x_k)$. The effect $E_k := {\sf A}(x_k)$ defines the probability $p_k = {\rm tr}(\rho E_k)$ for observing a particular outcome $x_k$ provided the system state is described by the density operator $\rho$. The total probability $\sum_{k=1}^m p_k = 1$ regardless of the density operator $\rho$ because the effects $E_k$ satisfy the relation
\begin{equation} \label{identity-resolution}
\sum \limits_k E_k = I.
\end{equation}

A POVM is called informationally complete if it realises an injective mapping from $\mathcal{S}({\cal H}_d)$ to the set of probability distributions on $\Omega$. An informationally complete POVM enables a precise state tomography of an unknown quantum state. Particular reconstruction schemes are presented, e.g., in~\cite{bogdanov-2010,filippov-2011}. For a POVM to be informationally complete, the number of outcomes $m$ has to satisfy the relation $m \geqslant d^2$ as there are $d^2$ linearly independent operators acting on ${\cal H}_d$ and ${\rm dim} \, {\rm Span}({\cal S}({\cal H}_d)) = d^2$ (see, e.g.,~\cite{heinosaari-ziman}). Therefore, measurements with $m<d^2$ outcomes are informationally incomplete. Restriction on the number of outcomes has operational meaning~\cite{fghl-2019}. Note, however, that some POVMs (including informationally complete ones) can be simulated by other POVMs with less number of outcomes~\cite{fhl-2018}.

To describe the state transformation caused by a measuring apparatus we need to review the concept of a quantum operation. As we deal with a finite-dimensional Hilbert space ${\cal H}_d$, for our purposes it suffices to consider the space ${\cal L}({\cal H}_d)$ of linear operators acting on ${\cal H}_d$. Then we define a quantum operation $\Phi: {\cal L}({\cal H}_d) \to {\cal L}({\cal H}_d)$ as a linear, completely positive, and trace nonincreasing map. Complete positivity of $\Phi$ means the map $\Phi \otimes {\rm Id}$ is positive for all identity transformations ${\rm Id}: {\cal L}({\cal H}_n) \to {\cal L}({\cal H}_n)$, $n \in \mathbb{N}$, i.e., $\Phi \otimes {\rm Id} [X] \geqslant O$ for all $O \leqslant X=X^{\dag} \in {\cal L}({\cal H}_d \otimes {\cal H}_n)$. Physical meaning of complete positivity is discussed, e.g., in~\cite{filippov-jms-2019}. Trace nonincreasing property means ${\rm tr}(\Phi[X]) \leqslant {\rm tr}(X)$ for all $O \leqslant X=X^{\dag} \in {\cal L}({\cal H}_d)$. By ${\cal O}({\cal H}_d)$ denote the set of quantum operations $\Phi: {\cal L}({\cal H}_d) \to {\cal L}({\cal H}_d)$.

An important concept in quantum information theory is an \emph{instrument} that assigns a quantum operation to any outcome from the outcome space $(\Omega, \mathcal{F})$. A mapping ${\cal I}: \mathcal{F} \to {\cal O}({\cal H}_d)$ is called an instrument if ${\rm tr} \big( {\cal I}(\Omega)[\rho] \big) = {\rm tr}(\rho) = 1$, ${\rm tr} \big( {\cal I}(\emptyset)[\rho] \big) = 0$, ${\rm tr} \big( {\cal I}(\cup_i X_i)[\rho] \big) = \sum_i {\rm tr} \big( {\cal I}(X_i)[\rho] \big)$ for all $\rho \in {\cal S}({\cal H}_d)$ and any sequence of mutually disjoint sets $\{X_i\}$, $X_i \subset {\cal F}$. The first condition implies ${\cal I}(\Omega)$ is not only completely positive but also trace preserving, i.e., a quantum channel. Physically, observation of an outcome $x_j$ while performing a discrete measurement on a quantum system in the state $\rho$ results in the disturbance of the state $\rho \to {\cal I}(x_j)[\rho]$ described by the quantum operation ${\cal I}(x_j)$. The conditional (normalized) output state reads $\frac{ {\cal I}(x_j)[\rho] }{{\rm tr}\big( {\cal I}(x_j)[\rho] \big)}$. In what follows, we consider discrete measurements and use brief notations $j$ and ${\cal I}_j$  to refer to $x_j$ and ${\cal I}(x_j)$, respectively. 

The relation between the instrument and the corresponding POVM is straightforward. The probability $p_j$ to get outcome $j$ for the input state $\rho$ equals
\begin{equation} \label{prob-instrument}
p_j = {\rm tr}({\cal I}_j [\rho])
\end{equation}

\noindent and the total probability to get any of the outcomes $j=1,\ldots,m$ equals
\begin{equation*}
\sum_{j=1}^m p_j = {\rm tr} \left( \sum_{j=1}^m {\cal I}_j [\rho] \right) = {\rm tr}({\cal I}(\Omega) [\rho]) = {\rm tr}[\rho] = 1.
\end{equation*}

\noindent Combining~\eqref{Born-rule} and~\eqref{prob-instrument}, we readily get 
\begin{equation} \label{effect-through-instrument}
E_j = {\cal I}_j^{\dag}(I),
\end{equation}

\noindent where $\Phi^{\dag}$ denotes a dual map with respect to $\Phi$, i.e., ${\rm tr}\big( \Phi^{\dag}[X] Y \big) = {\rm tr}\big( X \Phi[Y] \big)$ for all $X,Y \in {\cal L}({\cal H}_d)$. The condition $E_j \geqslant O$ is fulfilled because the dual map ${\cal I}_j^{\dag}$ is completely positive. The condition~\eqref{identity-resolution} is fulfilled because ${\cal I}(\Omega)$ is trace preserving and, consequently, $\big( {\cal I}(\Omega) \big)^{\dag}$ is unital, i.e., $\big( {\cal I}(\Omega) \big)^{\dag}[I] = I$.

%%%%%%%%%%%%%%%%%%%%%%%%%%%%%%%%%%%%%%%%%%%%%%%%%%%
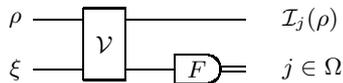
\begin{figure}
$\;$ \Qcircuit @C=1em @R=1em {
\lstick{\rho}& \qw & \multigate{1}{\mathcal{V}} & \qw & \qw & \qw &\rstick{\mathcal{I}_{j} (\rho)}\\ \lstick{\xi} & \qw & \ghost{\mathcal{V}} & \qw & \measureD{F} & \cw & \rstick{j \in \Omega} \\}  
\caption{\label{figure1} Schematic of a non-destructive measurement. $\mathcal{V}$ is a channel that couples a system in state $\rho$ and a probe in state $\xi$, $F$ is a POVM for the probe.}
\end{figure}
%%%%%%%%%%%%%%%%%%%%%%%%%%%%%%%%%%%%%%%%%%%%%%%%%

The mathematical formalism of a quantum instrument is tightly connected with a physical measurement model also referred to as a measuring process~\cite{ozawa-1984}, which describes an interaction between the system and a probe followed by a conventional measurement of the probe described by POVM (see Fig.~\ref{figure1}). The measurement model in Fig.~\ref{figure1} illustrates how the system in question interacts with a probe (their common evolution is described by a quantum channel ${\cal V}$), and the probe is measured afterwards. The resulting system transformation reads
\begin{equation} \label{instrument-from-measurement-model}
\mathcal{I}_j[\rho] = \tr_p \left( \mathcal{V} (\rho \otimes \xi) \left[ I \otimes F_j \right] \right),
\end{equation}

\noindent where $\tr_p$ denotes a partial trace over the probe's degrees of freedom. One can readily verify that formula~\eqref{instrument-from-measurement-model} defines a legitimate instrument. Remarkably, the inverse relation always holds true too: for any instrument ${\cal I}$ there exist a probe space, a probe initial state $\xi$, a channel ${\cal V}$, and a POVM $j \to F_j$ for the probe such that the relation~\eqref{instrument-from-measurement-model}, see, e.g.,~\cite{heinosaari-ziman}.

Importantly, the presented measurement model is non-destructive in a sense that the system is still available for further analysis after the measurement. In the next section, we consider sequential applications of the same non-destructive measurement apparatus to the system.

\section{Sequential uses of the same non-destructive measuring apparatus and informational completeness} \label{section-general}

We consider the measuring apparatus as a black box, whose input is a system in some state $\rho$ and whose output consists of two entities: a classical discrete outcome $j \in \{1, \ldots,m\}$ and a quantum system in the state $p_j^{-1} {\cal I}_j[\rho]$, see Fig.~\ref{figure2}(a). The measuring apparatus can therefore be applied again, now to the quantum outcome of its first use. A classical outcome for the second use, however, does not have to coincide with that for the first use. This enables extracting more information about the original system. The procedure can be continued in the same way arbitrarily many times. Let $N$ be the total number of the apparatus uses. Then we end up with a ``measurement tree diagram'' of depth $N$ describing all possible state transformations, see Fig.~\ref{figure2}(b). Interestingly, the problem of whether the measurement tree has an outcome that never occurs is undecidable~\cite{eisert-2012}.

Collecting classical outcomes for $N$ uses of the measuring apparatus, we get a multiindex $j_1 j_2 \ldots j_N$. The probability of observing a particular multiindex $j_1 j_2 \ldots j_N$ equals
\begin{equation} \label{probability-multiindex}
p_{j_1 j_2 \ldots j_N} = {\rm tr} \bigg( {\cal I}_{j_N} \Big[ \ldots {\cal I}_{j_2} \big[ {\cal I}_{j_1}[\rho] \big] \Big] \bigg). 
\end{equation}
%%%%%%%%%%%%%%%%%%%%%%%%%%%%%%%%%%%%%%%%%%%%%%%%%%%%%%%%%%%%%%%%%%%%%
\begin{figure}
\begin{minipage}[b]{0.3\textwidth}
     \begin{tikzpicture}[thick,
                    node distance=2.2cm,
                    text height=1.5ex,
                    text depth=.25ex,
                    auto, 
                    squarednode/.style={rectangle, draw=red!60,
                    fill=red!5, very thick, minimum size=5mm},
                    writeonarrow/.style = {pos=0.5,sloped,above}]
        \node [squarednode] (root) {$\mathcal{I}$};
        \node [right of=root] (empty) {};
        \node [left of=root] (emptyL) {};
        \node [below of=empty] (empty1) {};
        \node [above of=empty] (empty2) {};
        
        %\path[->, draw] (emptyL) edge node[near start] {$\rho$} (root);
        \path [->, draw] (emptyL) edge node [writeonarrow]  {$\rho$} (root);
        \path[->, draw] (root) edge node  [writeonarrow] {$\mathcal{I}_j[\rho]$} (empty);
        \path[->, draw] (root) edge node  [writeonarrow] {$\cdots$} (empty1);
        \path[->, draw] (root) edge node  [writeonarrow] {$\cdots$} (empty2);
       
    \end{tikzpicture}
    \center{(a)} 
    \bigskip
    \bigskip
    \bigskip
    \bigskip
    \bigskip
\end{minipage}
\begin{minipage}[b]{0.6\textwidth}
       \begin{tikzpicture}[thick,
                    node distance=2.2cm,
                    text height=1.5ex,
                    text depth=.25ex,
                    auto, 
                    squarednode/.style={rectangle, draw=red!60, fill=red!5, very thick, minimum size=5mm},
                    writeonarrow/.style = {pos=0.5,sloped,above}]
        \node [squarednode] (root) {$\mathcal{I}$};
        \node [squarednode, right of=root] (branch) {$\mathcal{I}$};
        \node [squarednode, above of=branch] (branch1) {$\mathcal{I}$};
        \node [squarednode, below of=branch] (branch2) {$\mathcal{I}$};
        \node [right of=branch] (empty) {};
        \node [left of=root] (emptyL) {};
        \node [below of=empty] (empty1) {};
        \node [above of=empty] (empty2) {};
        \node [below of=empty1] (empty0) {};
        \node [above of=empty2] (empty3) {};
        \node [right of=empty2] (empty5) {};
        \node [right of=empty1] (empty4) {};
        \node [right of=empty] (empty6) {};
        \node [below of=empty4] (empty7) {};
        \node [above of=empty5] (empty8) {};
        \node [below of=branch2] (nodedown) {};
        \node [above of=branch1] (nodeup) {};
        
        %\path[->, draw] (emptyL) edge node[near start] {$\rho$} (root);
        \path [->, draw] (emptyL) edge node [writeonarrow]  {$\rho$} (root);
        \path[->, draw] (root) edge node  [writeonarrow] {$\cdots$} (branch1);
        \path [->, draw] (root) edge node [writeonarrow]  {$\mathcal{I}_j[\rho]$} (branch);
        \path [->, draw] (root) edge node [writeonarrow]  {$\cdots$} (branch2);
        \path[->, draw] (branch) edge node  [writeonarrow] {$\cdots$} (empty6);
        \path[->, draw] (branch) edge node  [writeonarrow] {$\cdots$} (empty4);
        \path[->, draw] (branch) edge node [writeonarrow] {$\I_k \big[ \mathcal{I}_j [\rho] \big]$} (empty5);
        \path[->, draw] (branch2) edge node  [writeonarrow] {$\cdots$} (empty0);
        \path[->, draw] (branch2) edge node  [writeonarrow] {$\cdots$} (empty7);
        \path[->, draw] (branch1) edge node  [writeonarrow] {$\cdots$} (empty3);
        \path[->, draw] (branch1) edge node  [writeonarrow] {$\cdots$} (empty8);
        \path[->, draw] (branch1) edge node  [writeonarrow] {$\cdots$} (nodeup);
        \path[->, draw] (branch2) edge node  [writeonarrow] {$\cdots$} (nodedown);
       
    \end{tikzpicture}
    \center{(b)}
    
\end{minipage}
\caption{(a) Measuring apparatus. (b) Tree network for sequential uses of the same measuring apparatus.} \label{figure2}
\end{figure}
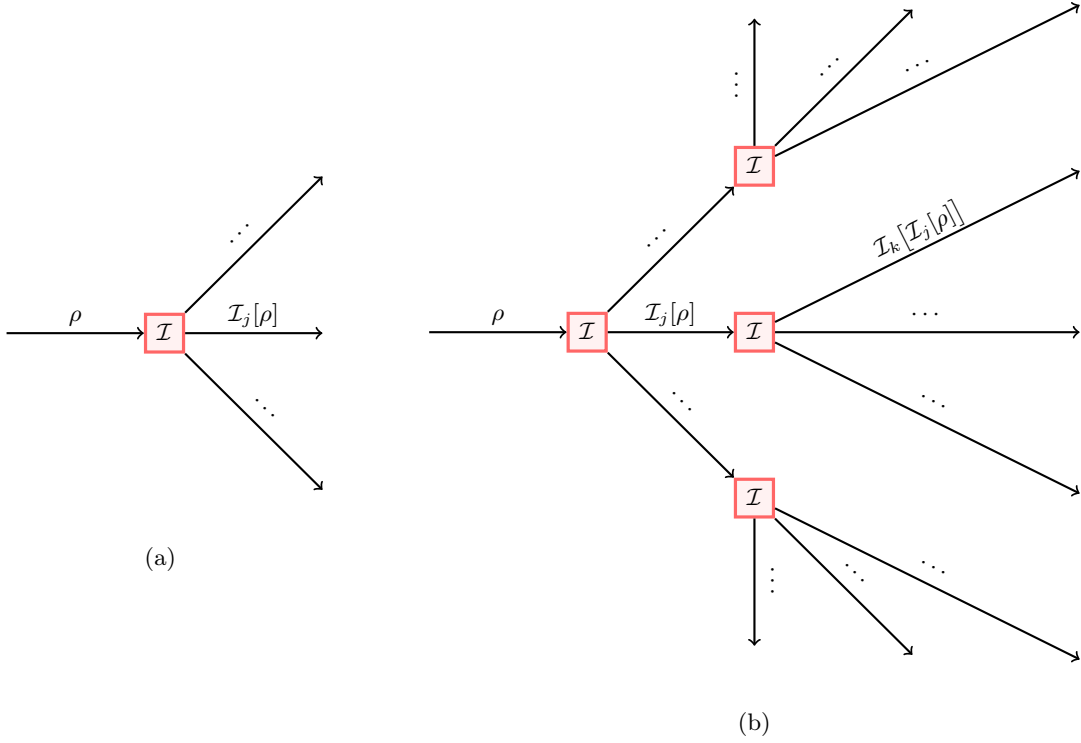
%%%%%%%%%%%%%%%%%%%%%%%%%%%%%%%%%%%%%%%%%%%%%%%%%%%%%%%%%%%%%%%%%%%%%

Consider an informationally incomplete measuring apparatus with $m$ outcomes, for which the mapping $\rho \to p_j$ is not injective. Physically, the distribution $\{p_j\}_{j=1}^m$ does not determine the density operator $\rho$ uniquely. The main question we address in this paper is whether the mapping $\rho \to p_{j_1 j_2 \ldots j_N}$ can become injective for some finite $N$. Physically, we study the problem whether $N$ uses of the same informationally incomplete measuring device can yield informationally complete statistics of multiindices. The following result provides a necessary and sufficient condition for the affirmative answer to the problem. 

\begin{theorem}\label{theorem}
$N$ sequential uses of an $m$-outcome measuring apparatus with instrument ${\cal I}$ provide informationally complete statistics if and only if
\begin{equation} \label{general-condition}
{\rm Span} \left( \left\{ \I^\dag_{j_1} \Big[ \ldots \big[ \I^\dag_{j_N}[I] \big] \Big] \right\}_{j_1,\ldots,j_N=1,\ldots,m} \right) = {\cal L}({\cal H}_d).
\end{equation} 
\end{theorem}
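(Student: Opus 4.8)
The plan is to characterize informational completeness of the multiindex statistics as a statement about linear spans, and then convert that span condition from one about states to the dual one about operators appearing in \eqref{general-condition}.

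First I would make precise what "informationally complete statistics" means: the map $\rho \mapsto \{p_{j_1\ldots j_N}\}$ is injective on $\mathcal{S}(\mathcal{H}_d)$. Since this map is affine and $\mathrm{Span}(\mathcal{S}(\mathcal{H}_d)) = \mathcal{L}(\mathcal{H}_d)$, injectivity on states is equivalent to injectivity of the induced linear map $L: \mathcal{L}(\mathcal{H}_d) \to \mathbb{R}^{m^N}$, $X \mapsto \{\mathrm{tr}(\mathcal{I}_{j_N}[\cdots \mathcal{I}_{j_1}[X]\cdots])\}$. (The standard argument: two states with the same statistics give a traceless Hermitian $X$ in the kernel; conversely a nonzero Hermitian $X$ in the kernel can be split into positive and negative parts and rescaled to yield two distinct states with identical statistics.) So the theorem reduces to: $L$ is injective $\iff$ \eqref{general-condition}.

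Next I would rewrite each component of $L$ using the dual maps. By the defining property $\mathrm{tr}(\Phi^\dagger[X]Y) = \mathrm{tr}(X\Phi[Y])$ applied iteratively, one gets
\begin{equation*}
\mathrm{tr}\!\left(\mathcal{I}_{j_N}\big[\cdots \mathcal{I}_{j_1}[X]\cdots\big]\right)
= \mathrm{tr}\!\left(I \cdot \mathcal{I}_{j_N}\big[\cdots\big]\right)
= \mathrm{tr}\!\left(X\, \I^\dagger_{j_1}\Big[\cdots\big[\I^\dagger_{j_N}[I]\big]\Big]\right).
\end{equation*}
Thus $L(X) = 0$ exactly when $X$ is orthogonal, in the Hilbert--Schmidt inner product, to every operator $\I^\dagger_{j_1}[\cdots[\I^\dagger_{j_N}[I]]]$. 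Hence $\ker L$ is the orthogonal complement of $W := \mathrm{Span}(\{\I^\dagger_{j_1}[\cdots[\I^\dagger_{j_N}[I]]]\})$. Therefore $L$ is injective $\iff$ $W^\perp = \{0\}$ $\iff$ $W = \mathcal{L}(\mathcal{H}_d)$, which is precisely \eqref{general-condition}.

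The main subtlety — and the step I would be most careful about — is the "real versus complex span" issue: states live in the real vector space of Hermitian operators, whereas \eqref{general-condition} asserts a span equal to all of $\mathcal{L}(\mathcal{H}_d)$. The resolution is that each $\I^\dagger_{j_1}[\cdots[\I^\dagger_{j_N}[I]]]$ is Hermitian (duals of completely positive maps send Hermitian operators, in particular $I$, to Hermitian operators), so the complex span of these operators has a Hermitian basis and equals $\mathcal{L}(\mathcal{H}_d)$ iff their real span equals the real space of Hermitian operators, which has real dimension $d^2$. Matching this up — real-linear injectivity of $L$ restricted to Hermitian operators versus the complex span condition — is the one place where a clean lemma (Hermitian span $=$ full complex span) is needed; everything else is the orthogonality bookkeeping above. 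I would also note in passing that each such operator is automatically an effect-like positive operator, but positivity is not needed for the argument, only Hermiticity and the duality identity.
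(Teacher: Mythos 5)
Your proposal is correct and follows essentially the same route as the paper: rewrite $p_{j_1\ldots j_N}={\rm tr}\big(\rho\,\I^\dag_{j_1}[\ldots[\I^\dag_{j_N}[I]]]\big)$ via the dual maps and invoke the standard criterion that a collection of effects is informationally complete iff it spans ${\cal L}({\cal H}_d)$. The paper states this criterion without proof, whereas you supply the standard supporting details (the Hilbert--Schmidt orthogonality argument and the real-versus-complex span point), which is a welcome but not substantively different elaboration.
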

\begin{proof}
Rewriting the probability~\eqref{probability-multiindex} with the help of dual maps, we get

\noindent The statistics of outcomes $j_1 j_2 \ldots j_N$ is informationally complete if and only if all the effects $E_{j_1 j_2 \ldots j_N} := \I^\dag_{j_1} \Big[ \ldots \big[ \I^\dag_{j_N}[I] \big] \Big]$ span ${\cal L}({\cal H}_d)$, which completes the proof.
\end{proof}

Theorem~\ref{theorem} allows us to exclude the whole class of sharp (projective) measurements from consideration because they do not satisfy condition~\eqref{general-condition} as we show below. We refer to a measurement as sharp if the corresponding effects $E_j$ (given by formula~\eqref{effect-through-instrument}) are projectors, i.e., $E_j^2 = E_j$.

\begin{proposition} \label{proposition-projective}
Sequential uses of sharp measurements cannot provide an informationally complete statistics.
\end{proposition}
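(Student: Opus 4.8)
The plan is to invoke Theorem~\ref{theorem}: $N$ sequential uses are informationally complete if and only if the iterated effects $E_{j_1\ldots j_N} = \I^\dagger_{j_1}\big[\cdots\big[\I^\dagger_{j_N}[I]\big]\cdots\big]$ span $\mathcal{L}(\mathcal{H}_d)$. I will show that for a sharp measurement these effects are trapped, for every $N$, inside one and the same proper subspace of $\mathcal{L}(\mathcal{H}_d)$.

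First I would record that the effects $E_j = \I^\dagger_j[I]$ of a sharp measurement are mutually orthogonal projectors. Writing $P_j := E_j$ and combining $\sum_j P_j = I$ with $P_k^2 = P_k$ gives $P_k = P_k\big(\sum_j P_j\big)P_k = P_k + \sum_{j\neq k} P_k P_j P_k$, and since each $P_k P_j P_k \geq O$ this forces $P_k P_j P_k = O$, hence $(P_j P_k)^\dagger (P_j P_k) = O$ and $P_j P_k = O$ for $j\neq k$. Put $r_j := \mathrm{rank}\,P_j$, so that $\sum_j r_j = d$ and $\dim\big(P_j\,\mathcal{L}(\mathcal{H}_d)\,P_j\big) = r_j^2$.

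Next comes the key step: for any quantum operation $\I_j$ whose POVM element $\I^\dagger_j[I] = P_j$ is a projector, one has $\I^\dagger_j[X] = P_j\,\I^\dagger_j[X]\,P_j$ for all $X\in\mathcal{L}(\mathcal{H}_d)$, i.e. $\mathrm{ran}\,\I^\dagger_j \subseteq P_j\,\mathcal{L}(\mathcal{H}_d)\,P_j$. I would prove this either via a Kraus decomposition $\I_j[\rho]=\sum_a K_{ja}\rho K_{ja}^\dagger$ — then $\sum_a K_{ja}^\dagger K_{ja} = P_j$, so conjugating this identity by $I-P_j$ and using positivity of each summand yields $K_{ja}(I-P_j)=O$, whence $K_{ja}=K_{ja}P_j$ and $\I^\dagger_j[X]=\sum_a P_j K_{ja}^\dagger X K_{ja} P_j$ — or, equivalently, by positivity alone: for $O\leq X\leq I$ complete positivity of $\I^\dagger_j$ gives $O\leq \I^\dagger_j[X]\leq \I^\dagger_j[I]=P_j$, and $O\leq A\leq P$ with $P$ a projector implies $A=PAP$; the general $X$ follows by linearity.

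Then each iterated effect $E_{j_1\ldots j_N}=\I^\dagger_{j_1}[\,\cdots\,]$ lies in $P_{j_1}\mathcal{L}(\mathcal{H}_d)P_{j_1}$, so $\mathrm{Span}\big(\{E_{j_1\ldots j_N}\}\big)\subseteq \bigoplus_j P_j\,\mathcal{L}(\mathcal{H}_d)\,P_j$, an orthogonal direct sum of dimension $\sum_j r_j^2$. Since $\sum_j r_j^2 \leq \big(\sum_j r_j\big)^2 = d^2$, with equality only when a single $r_j$ equals $d$ (all others zero), this span is a proper subspace of $\mathcal{L}(\mathcal{H}_d)$ unless the measurement is the trivial one with $E_1=I$ and all other effects zero — in which case the only multiindex that ever occurs is $11\ldots1$ and the statistics is again not informationally complete. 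Either way condition~\eqref{general-condition} fails for every $N$, which proves the claim. The only delicate point is the key step above — establishing that \emph{every} instrument compatible with a projective POVM, not just the L\"{u}ders one, keeps all its outputs inside $P_j\,\mathcal{L}(\mathcal{H}_d)\,P_j$; the rest is the elementary dimension count.
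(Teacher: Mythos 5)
Your proof is correct and follows essentially the same route as the paper: both arguments confine every iterated effect $E_{j_1\ldots j_N}$ to the corner $E_{j_1}\mathcal{L}(\mathcal{H}_d)E_{j_1}$ determined by the first outcome, and then note that these corners can only span $\mathcal{L}(\mathcal{H}_d)$ in the trivial case $E_{j'}=I$ (all other effects zero), which unitality immediately rules out. The only real difference is that the paper cites the structural result $\mathcal{I}_j[\rho]=\Phi_j[E_j\rho E_j]$ for instruments compatible with a sharp POVM, whereas you derive the needed range confinement $K_{ja}=K_{ja}P_j$ directly from the Kraus decomposition and make the dimension count $\sum_j r_j^2<d^2$ explicit, which is a self-contained (and slightly more detailed) rendering of the same idea.
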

\begin{proof}
Every instrument that is compatible with a sharp POVM with effects $E_j$ has the form ${\cal I}_j [\rho] = \Phi_j [E_j \rho E_j]$, where $\{\Phi_j\}$ is a set of quantum channels~\cite{hayashi}. Therefore, ${\rm supp} \, \I^\dag_{j_1} \Big[ \ldots \big[ \I^\dag_{j_N}[I] \big] \Big] \subset {\rm supp} \, E_{j_1}$ and 
\begin{equation*}
{\rm Span} \left( \left\{ {\rm supp \,} E_{j_1} \right\}_{j_1=1,\ldots,m} \right) = \bigcup_{j} {\rm supp }\, E_j.
\end{equation*}

\noindent The relation $\bigcup_{j} {\rm supp} E_j = {\cal L}({\cal H}_d)$ holds true if and only if one of the projectors $E_{j'} = I$ and all other effects $E_{j}= O$, $j \neq j'$. If this is the case, then all operators $\I^\dag_{j_1} \Big[ \ldots \big[ \I^\dag_{j_N}[I] \big] \ldots \Big] = I$ due to unitality of quantum channels. Hence, the requirement~\eqref{general-condition} cannot be met by sharp measurements.
\end{proof}

Consider a special class of L\"{u}ders instruments ${\cal I}_j^{\rm L}$ that are fully determined by a reduced description in terms of the associated POVM effects $\{E_j\}_{j=1}^m$ as follows:
\begin{equation*}
{\cal I}_j^{\rm L} [\rho] = \sqrt{E_j} \rho \sqrt{E_j}. 
\end{equation*}

\begin{proposition} \label{proposition-Luders}
L\"{u}ders instrument with 2 outcomes cannot provide informationally complete statistics regardless of the number of uses.
\end{proposition}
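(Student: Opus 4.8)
The plan is to reduce the problem to an explicit two–dimensional computation on the commutative algebra generated by the single effect. For a two-outcome Lüders instrument we have effects $E_1 = E$ and $E_2 = I - E$ for some $0 \leqslant E \leqslant I$, with duals $(\I^{\rm L}_1)^\dagger[X] = \sqrt{E}\,X\sqrt{E}$ and $(\I^{\rm L}_2)^\dagger[X] = \sqrt{I-E}\,X\sqrt{I-E}$. Since both dual maps are conjugations by functions of $E$, each map sends an operator of the form $f(E)$ to $g(E)$ for some function $g$, and in particular every iterated image $(\I^{\rm L}_{j_1})^\dagger[\cdots[(\I^{\rm L}_{j_N})^\dagger[I]]\cdots]$ lies in the commutative $*$-subalgebra $\mathcal{A} := \{f(E) : f \text{ a function on } \mathrm{spec}(E)\} \subseteq \mathcal{L}(\mathcal{H}_d)$. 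Indeed, starting from $I \in \mathcal{A}$ and applying $X \mapsto \sqrt{E}\,X\,\sqrt{E}$ or $X \mapsto \sqrt{I-E}\,X\,\sqrt{I-E}$ keeps us inside $\mathcal{A}$ because products of functions of $E$ are again functions of $E$. Hence the span of all iterated effects is contained in $\mathcal{A}$, and condition~\eqref{general-condition} would require $\mathcal{A} = \mathcal{L}(\mathcal{H}_d)$.

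The next step is to observe that $\mathcal{A} = \mathcal{L}(\mathcal{H}_d)$ forces $d = 1$, which is excluded by assumption ($d>1$). Concretely, if $E$ has spectral decomposition $E = \sum_{\lambda} \lambda\, P_\lambda$ with $k$ distinct eigenvalues, then $\dim \mathcal{A} = k \leqslant d < d^2 = \dim \mathcal{L}(\mathcal{H}_d)$ whenever $d > 1$; even in the extreme case $k = d$ (all eigenvalues distinct) we only get $\dim\mathcal{A} = d$, still strictly less than $d^2$. Therefore the span of the iterated effects has dimension at most $d < d^2$, so it can never equal $\mathcal{L}(\mathcal{H}_d)$, and by Theorem~\ref{theorem} the statistics is never informationally complete, regardless of $N$.

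The main obstacle — really the only subtlety — is making the closure-under-iteration argument airtight: one must check that not just $\sqrt{E}$ and $\sqrt{I-E}$ but all their products and sums that arise along the tree remain polynomials (equivalently, functions) in the single operator $E$. This follows immediately from the fact that $E$ and $I-E$ commute and the functional calculus of a single self-adjoint operator is commutative, so $\sqrt{E}\,f(E)\,\sqrt{E} = E\,f(E)$ is again a function of $E$, and likewise $\sqrt{I-E}\,f(E)\,\sqrt{I-E} = (I-E)f(E)$. Thus every leaf operator is of the form $E^{a}(I-E)^{b}$ with $a+b = N$, and these all lie in $\mathcal{A}$. I would also remark that this argument shows more: it is the two-outcome (hence single-operator, hence commutative) structure that is fatal, which is consistent with the fact — demonstrated later in the paper — that instruments with more outcomes or with non-Lüders structure can succeed.
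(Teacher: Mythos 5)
Your proposal is correct and follows essentially the same route as the paper's proof: both exploit that $\sqrt{E_1}$ and $\sqrt{E_2}=\sqrt{I-E_1}$ commute, so every iterated effect is of the form $E_1^{a}E_2^{b}$, diagonal in the eigenbasis of $E_1$, whence the span has dimension at most $d<d^2$ and condition~\eqref{general-condition} fails by Theorem~\ref{theorem}. Your phrasing via the commutative algebra generated by $E$ is just a slightly more explicit packaging of the same observation.
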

\begin{proof}
Since $E_2 = I - E_1$, the operators $\sqrt{E_1}$ and $\sqrt{E_2}$ commute. Therefore, we have $\I^{{\rm L} \dag}_{j_1} \Big[ \ldots \big[ \I^{ {\rm L} \dag}_{j_N}[I] \big] \ldots \Big] = E_1^{n_1} E_2^{n_2}$, where $n_1 = \sum_{k=1}^m \delta_{1,j_k}$ and $n_2 = \sum_{k=1}^m \delta_{2,j_k}$. All the operators $E_1^{n_1} E_2^{n_2}$ are diagonal in the eigenbasis of the effect $E_1$, which implies 
\begin{equation*}
{\rm dim} \, {\rm Span} \, \left( \{E_1^{n_1} E_2^{n_2}\}_{n_1+n_2 = m}\right) \leqslant d < d^2 = {\rm dim} \, {\cal L}({\cal H}_d)
\end{equation*}

\noindent and impossibility to satisfy the requirement~\eqref{general-condition}.
\end{proof}

Propositions~\ref{proposition-projective} and~\ref{proposition-Luders} can be also considered as implications of the previously known results for commutative L\"{u}ders instruments~\cite{busch-1989}. To achieve the informational completeness and perform a precise quantum state tomography one has to implement several sequential measurements. The following result quantifies the lower bound on the number of measurements needed.

\begin{proposition} \label{proposition-minimal-number}
If $N$ sequential applications of a measuring apparatus for a $d$-level system with $m$ outcomes result in the informationally complete statistics, then $N \geqslant 2\log_m d$.
\end{proposition}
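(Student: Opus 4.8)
The plan is to reduce the statement to a dimension count built directly on Theorem~\ref{theorem}. First I would count the multiindices: after $N$ uses of the $m$-outcome apparatus the classical record is a string $j_1 j_2 \ldots j_N$ with each $j_k \in \{1,\ldots,m\}$, so there are at most $m^N$ of them (equivalently, the depth-$N$ measurement tree in Fig.~\ref{figure2}(b) has at most $m^N$ leaves). Hence the family of operators $\big\{\I^\dag_{j_1}\big[\ldots\big[\I^\dag_{j_N}[I]\big]\big]\big\}$ that appears in condition~\eqref{general-condition} consists of at most $m^N$ elements, some of which may coincide or vanish.

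Next I would recall, as noted in Section~\ref{section-instruments}, that ${\rm dim}\,{\cal L}({\cal H}_d) = d^2$. A set of at most $m^N$ vectors can span a vector space of dimension $d^2$ only if $m^N \geqslant d^2$. By Theorem~\ref{theorem}, informational completeness of the statistics of the $N$-fold multiindices is \emph{equivalent} to the spanning condition~\eqref{general-condition}, so informational completeness forces $m^N \geqslant d^2$. Taking the base-$m$ logarithm then yields $N \geqslant \log_m(d^2) = 2\log_m d$, which is the claimed bound.

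I do not expect any real obstacle here; the only point that needs a word of care is that the upper bound $m^N$ on the number of effects $E_{j_1\ldots j_N}$ remains valid even if some branches of the tree carry zero probability or produce repeated operators, because in those cases the number of \emph{distinct, nonzero} operators only decreases, which makes the requirement~\eqref{general-condition} harder rather than easier to meet. Thus the inequality $m^N \geqslant d^2$ is genuinely necessary, and the proposition follows.
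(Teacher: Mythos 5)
Your proof is correct and follows essentially the same route as the paper's: counting the $m^N$ leaves of the depth-$N$ measurement tree and comparing with $\dim\mathcal{L}(\mathcal{H}_d)=d^2$ to force $m^N\geqslant d^2$. The only cosmetic difference is that you route the dimension count explicitly through the spanning condition~\eqref{general-condition} of Theorem~\ref{theorem}, whereas the paper states directly that an informationally complete measurement needs at least $d^2$ outcomes; the substance is identical.
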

\begin{proof}
The measurement tree diagram of depth $N$ has $m^N$ possible outcomes. The informationally complete measurement has to have at least $d^2$ outcomes. Therefore, we readily obtain $m^N \geq d^2$ and $N \geq 2 \log_m d$.
\end{proof}

So far we have found a general condition~\eqref{general-condition} for informational completeness and revealed inability of projective measurements and L\"{u}ders measurements with 2 outcomes to provide informational completeness regardless of how many times they are used. The following section provides constructive examples of how the informational completeness emerges from sequential uses of the same informationally incomplete measurement.

\section{Qubit tomography via two sequential uses of a dichotomic measurement} \label{section-qubits}

We give two examples of a dichotomic ($m=2$) measuring apparatus for a qubit ($d=2$), which delivers informational completeness after the second application ($N=2$). Note that $N=2$ is the minimal number allowed by Proposition~\ref{proposition-minimal-number} in this case. Apparently, a single measuring apparatus is not informationally complete because $m<d^2$. In each example, the measurement tree has depth 2 and its 4 branches are enumerated by possible outcomes $\{j_1 j_2\}_{j_1,j_2=1,2}$. The probablity distribution $\{ p_{j_1 j_2} \}_{j_1,j_2=1,2}$ is informationally complete and enables reconstruction of the density operator $\rho$. 

To take into account experimental errors while estimating probabilities $\{ p_{j_1 j_2} \}_{j_1,j_2=1,2}$ by means of the corresponding relative frequencies, we quantify robustness of the proposed tomographic schemes via the conditional number of the associated Gram matrix
\begin{equation*}
G_{xy} =  \tr \left( E_x E_y\right), \quad x, y = \lbrace j_1,j_2 \rbrace.
\end{equation*}

\noindent The condition number is expressed through the eigenvalues $\{\lambda_i\}_{i=1}^{m^2}$ of the Gram matrix as follows 
\begin{equation*}
\Lambda = \frac{\max_i |\lambda_i|}{\min_i |\lambda_i|}.
\end{equation*}

\noindent The less the condition number, the more robust the density matrix reconstruction scheme to errors~\cite{filippov-2010,bogdanov-jetp-2010}. For instance, the most robust tomographic scheme with 4 outcomes for qubits is the symmetric informationally complete observable~\cite{caves-2004,fm-sic-2010}, for which the conditional number $\Lambda = 3$.

\begin{example} \label{example-qubit-Kraus-rank-2}
Define a one-parameter instrument with 2 outcomes
\begin{eqnarray} \label{operations-in-example-1}
&& \I_1 (\rho) = \frac{1}{2} H \left(  E  \rho E^\dagger  +  T \rho T^\dagger \right) H^{\dag}, \\
&&  \I_2 (\rho) = \frac{1}{2} V \left( E^\dagger \rho E  +  B  \rho B^\dagger \right) V^\dagger, \\
\end{eqnarray}

\noindent where the operators $E, T$, and $B$ are expressed through a real parameter $p \in [0,1]$ as follows:
\begin{equation*}
E = \begin{pmatrix}
 0 & \sqrt{p}\\
 0 & 0
 \end{pmatrix}, \quad  
T = \begin{pmatrix}
 \sqrt{1-p} & 0\\
 0 & 1 \\
 \end{pmatrix},  \quad
B = 
 \begin{pmatrix}
 1 & 0\\
 0 & \sqrt{1-p} \\
 \end{pmatrix},
\end{equation*}

\noindent and the unitary operators $H$ and $V$ read
\begin{equation} \label{H-V-unitaries}
H = \frac{1}{\sqrt{2}}
 \begin{pmatrix}
 1 & 1\\
 1 & -1
 \end{pmatrix}, \quad  V = \frac{1}{\sqrt{2}}
 \begin{pmatrix}
 1 & -i\\
 1 & i
 \end{pmatrix}.
\end{equation}

\noindent Since $H^{\dag}H = V^{\dag}V = I$, we readily obtain 
\begin{equation*}
     \I_1^\dagger (I) = \frac{1}{2}\begin{pmatrix}
     1 -p & 0\\
     0 & 1+p
     \end{pmatrix}, \quad \I_2^\dagger (I) = \frac{1}{2}\begin{pmatrix}
     1 + p & 0\\
     0 & 1-p
     \end{pmatrix}.
 \end{equation*}
 \noindent Direct calculation of the collective effects for the measurement tree of depth 2 yields
\begin{eqnarray*}
&&    E_{11} = \Id_1\left[ \Id_1 [I] \right] = \frac{1}{4} \left( I - \sigma_z  p -  \sigma_x p \sqrt{1-p}  \right), \\
&&     E_{12} = \Id_1\left[ \Id_2 [I] \right] = \frac{1}{4} \left( I - \sigma_z  p +  \sigma_x p \sqrt{1-p}  \right), \\
&&     E_{21} = \Id_2\left[ \Id_1 [I] \right] = \frac{1}{4} \left( I + \sigma_z  p -  \sigma_y p \sqrt{1-p}  \right), \\
&&      E_{22} = \Id_2\left[ \Id_2 [I] \right] = \frac{1}{4} \left( I + \sigma_z  p +  \sigma_y p \sqrt{1-p}  \right),\\
 \end{eqnarray*}
 
\noindent where $(\sigma_x,\sigma_y,\sigma_z)$ is the conventional set of Pauli operators. The obtained 4 effects $\{E_{j_1 j_2}\}_{j_1,j_2}$ are linearly independent self-adjoint operators in ${\cal L}({\cal H}_2)$ if $0<p<1$, therefore the mapping $\rho \to p_{j_1 j_2} = {\rm tr}(\rho E_{j_1 j_2})$ is injective and the statistics $\{ p_{j_1 j_2} \}_{j_1,j_2}$ is informationally complete if $0<p<1$. The optimal experimental implementation corresponds to the minimal condition number, which equals $\frac{27}{2}$ and is achieved at $p = \frac{2}{3}$.
 \end{example}
 
In the considered example, operations~\eqref{operations-in-example-1} have Kraus rank 2. In the following example, the operations defining an instrument have Kraus rank 1. 
 
\begin{example} \label{example-qubit-Kraus-rank-1}
\noindent Consider a one-parameter instrument with operations
\begin{eqnarray*}
&&    \I_1 [\rho] = \frac{1}{2} H \begin{pmatrix}
        \sqrt{1-p} & 0\\
        0 & \sqrt{1+p} \\
    \end{pmatrix}  \rho  \begin{pmatrix}
        \sqrt{1-p} & 0\\
        0 & \sqrt{1+p} \\
    \end{pmatrix} H^\dagger, \\
&&    \I_2 [\rho] = \frac{1}{2} V \begin{pmatrix}
        \sqrt{1+p} & 0\\
        0 & \sqrt{1-p} \\
    \end{pmatrix} \rho \begin{pmatrix}
        \sqrt{1+p} & 0\\
        0 & \sqrt{1-p} \\
    \end{pmatrix} V^\dagger, \\
\end{eqnarray*}
\noindent where a real parameter $p\in[0,1]$ and operators $H$ and $V$ are given by Eq.~\eqref{H-V-unitaries}. Some algebra yields
\begin{equation*}
    {\cal I}^\dagger_{1}[I] = \frac{1}{2}\left(I- p \sigma_z \right), \quad
    {\cal I}^\dagger_{2}[I] = \frac{1}{2}\left(I + p \sigma_z \right)
\end{equation*}

\noindent and the following collective effects
\begin{eqnarray*}
&&     E_{11} = \Id_1\left[ \Id_1 [I] \right] = \frac{1}{4} \left( I - \sigma_z  p -  \sigma_x p \sqrt{1-p^2}  \right), \\
&&     E_{12} = \Id_1\left[ \Id_2 [I] \right] = \frac{1}{4} \left( I - \sigma_z  p +  \sigma_x p \sqrt{1-p^2}  \right), \\
&&     E_{21} = \Id_2\left[ \Id_1 [I] \right] = \frac{1}{4} \left( I + \sigma_z  p -  \sigma_y p \sqrt{1-p^2}  \right), \\
&&      E_{22} = \Id_2\left[ \Id_2 [I] \right] = \frac{1}{4} \left( I + \sigma_z  p +  \sigma_y p \sqrt{1-p^2}  \right).\\
\end{eqnarray*}

\noindent The obtained effects are linearly independent self-adjoint operators in ${\cal L}({\cal H}_2)$ if $0<p<1$, which guarantees the informational completeness of the statistics $\{p_{j_1 j_2} = {\rm tr}(\rho E_{j_1 j_2})\}_{j_1,j_2}$. The optimal parameter $p = \frac{1}{\sqrt{2}}$ results in the minimal condition number $\Lambda = 8$. Comparing the condition numbers for examples~\ref{example-qubit-Kraus-rank-2} and \ref{example-qubit-Kraus-rank-1}, we conclude that the latter one is more robust to experimental errors.
\end{example}

\section{Two sequential measurements for $d$-level systems}\label{section-qudits}

In this section, we consider $d$-dimensional quantum systems and present a specific construction for the measuring apparatus with $m=d$ outcomes such that two sequential applications of this apparatus enable informational completeness of outcomes. 

Let $\{\ket{k}\}_{k=1}^d$ be an orthonormal basis in ${\cal H}_d$. Suppose a real parameter $p$ satisfies $-\frac{1}{d-1} \leqslant p \leqslant 1$, then the operator $\frac{1-p}{d} I + p \ketbra{k}{k}$ is positive semidefinite. Let $\{U_k\}_{k=1}^d$ be a set of unitary operators on ${\cal H}_d$, then the transformations
\begin{equation} \label{instrument-d-level}
    \I_k [\rho] = U_k \, \sqrt{\frac{1-p}{d} I + p \ketbra{k}{k}} \, \rho \, \sqrt{\frac{1-p}{d} I + p \ketbra{k}{k}} \, U_k^\dagger, \quad k = 1, \ldots, d,
\end{equation}

\noindent are completely positive and trace nonincreasing. Moreover, we have
\begin{equation*}
\sum_{k=1}^d {\cal I}_k^{\dag} [I] = \sum_{k=1}^d \left( \frac{1-p}{d} I + p \ketbra{k}{k} \right) = I,
\end{equation*}

\noindent which means that the map $k\to {\cal I}_k$ is a valid quantum instrument. The instrument~\eqref{instrument-d-level} is not informationally complete because the number of outcomes $m=d<d^2$. However, two sequential uses of this instrument lead to the following $d^2$ effects:
\begin{equation*}
E_{j_1 j_2} = \frac{1-p}{d} \left( \frac{1-p}{d} I + p \ketbra{j_1}{j_1} \right) + p \sqrt{\frac{1-p}{d} I + p \ketbra{j_1}{j_1}} \, U_{j_1}^{\dag} \ket{j_2}\bra{j_2} U_{j_1} \sqrt{\frac{1-p}{d} I + p \ketbra{j_1}{j_1}}.
\end{equation*}

Suppose $0<p<1$, then $\frac{1-p}{d} I + p \ketbra{j_1}{j_1}$ is a full rank operator for any $j_1$ and 
\begin{eqnarray*}
&& {\rm Span}\left( \left\{ \sqrt{\frac{1-p}{d} I + p \ketbra{j_1}{j_1}} \, U_{j_1}^{\dag} \ket{j_2}\bra{j_2} U_{j_1} \sqrt{ \frac{1-p}{d} I + p \ketbra{j_1}{j_1} } \right\}_{j_2 = 1}^d \right) \nonumber\\
&& = {\rm Span}\Big(\{U_{j_1}^{\dag} \ket{j_2}\bra{j_2} U_{j_1} \}_{j_2 = 1}^d \Big).
\end{eqnarray*}

\noindent Since $\sum_{j_2 = 1}^{d} E_{j_1 j_2} = \frac{1-p}{d} I + p \ketbra{j_1}{j_1}$, we have
\begin{equation*}
{\rm Span}\left(\left\{ \frac{1-p}{d} I + p \ketbra{j_1}{j_1} \right\} \bigcup \, \{U_{j_1}^{\dag} \ket{j_2}\bra{j_2} U_{j_1} \}_{j_2 = 1}^d \right) \subset {\rm Span}\left( \left\{ E_{j_1 j_2} \right\}_{j_2 = 1}^d \right)
\end{equation*}

\noindent and 

\begin{equation*}
{\rm Span}\left(\left\{ \frac{1-p}{d} I + p \ketbra{j_1}{j_1} \right\}_{j_1=1}^d \bigcup \, \{U_{j_1}^{\dag} \ket{j_2}\bra{j_2} U_{j_1} \}_{j_1,j_2 = 1}^d \right) \subset {\rm Span}\left( \left\{ E_{j_1 j_2} \right\}_{j_1,j_2 = 1}^d \right).
\end{equation*}

\noindent As ${\rm Span}\left(\left\{ \frac{1-p}{d} I + p \ketbra{j_1}{j_1} \right\}_{j_1 = 1}^d \right) = {\rm Span}\Big(\{ \ket{j_1}\bra{j_1} \}_{j_1=1}^d \Big)$, we finally get
\begin{equation*}
{\rm Span}\left(\{ \ket{j_1}\bra{j_1} \}_{j_1=1}^d \bigcup \, \{U_{j_1}^{\dag} \ket{j_2}\bra{j_2} U_{j_1} \}_{j_1,j_2 = 1}^d \right) \subset {\rm Span}\left( \left\{ E_{j_1 j_2} \right\}_{j_1,j_2 = 1}^d \right).
\end{equation*}

\noindent The known result in the theory of quantum state tomography for $d$-level systems is that there exists a set of unitary operators $\{I, U_1, \ldots, U_d\}$ such that 
\begin{equation*}
{\rm Span}\Big(\{ \ket{k}\bra{k} \}_{k=1}^d \bigcup \{ U_{k}^{\dag} \ket{l}\bra{l} U_{k} \}_{k,l=1}^d \Big) = {\cal L}({\cal H}_d),
\end{equation*}

\noindent see Refs.~\cite{wootters-1989,komisarski-2013}. Therefore, taking this set of unitary operators, we get ${\cal L}({\cal H}_d) \subset {\rm Span}\left( \left\{ E_{j_1 j_2} \right\}_{j_1,j_2 = 1}^d \right)$, which implies ${\rm Span}\left( \left\{ E_{j_1 j_2} \right\}_{j_1,j_2 = 1}^d \right) = {\cal L}({\cal H}_d)$. Using Theorem~\ref{theorem}, we conclude that two sequential applications of the instrument~\eqref{instrument-d-level} provide informationally complete statistics of outcomes $j_1 j_2$. To summarize, we have just proved the following result.

\begin{proposition} \label{proposition-qudits}
There exist unitary operators $U_1, \ldots, U_d$ such that two sequential uses of the instrument~\eqref{instrument-d-level} with $0<p<1$ provide informationally complete statistics of outcomes.
\end{proposition}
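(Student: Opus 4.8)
The plan is to invoke Theorem~\ref{theorem}: it suffices to produce unitaries $U_1,\dots,U_d$ for which the $d^2$ collective effects $E_{j_1 j_2}=\Id_{j_1}\big[\Id_{j_2}[I]\big]$ span ${\cal L}({\cal H}_d)$. I would begin by computing them. Writing $M_k=\frac{1-p}{d}I+p\ketbra{k}{k}$, the dual of the operation in~\eqref{instrument-d-level} is $\Id_k[X]=\sqrt{M_k}\,U_k^\dagger X U_k\,\sqrt{M_k}$, so $\Id_{j_2}[I]=M_{j_2}$ and
\begin{equation*}
E_{j_1 j_2}=\sqrt{M_{j_1}}\,U_{j_1}^\dagger M_{j_2} U_{j_1}\,\sqrt{M_{j_1}}=\frac{1-p}{d}\,M_{j_1}+p\,\sqrt{M_{j_1}}\,U_{j_1}^\dagger\ketbra{j_2}{j_2}U_{j_1}\,\sqrt{M_{j_1}}.
\end{equation*}

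Next I would use the hypothesis $0<p<1$, which makes each $M_{j_1}$ full rank, so the congruence $X\mapsto\sqrt{M_{j_1}}X\sqrt{M_{j_1}}$ is a linear automorphism of ${\cal L}({\cal H}_d)$. For fixed $j_1$, summing $E_{j_1 j_2}$ over $j_2$ returns $M_{j_1}$, and since the matrix relating $\{E_{j_1 j_2}\}_{j_2}$ to $\{\sqrt{M_{j_1}}U_{j_1}^\dagger\ketbra{j_2}{j_2}U_{j_1}\sqrt{M_{j_1}}\}_{j_2}$ equals $\frac{1-p}{d}J+pI$ (with $J$ the all-ones matrix), which is invertible for $0<p<1$, one obtains
\begin{equation*}
{\rm Span}\big(\{E_{j_1 j_2}\}_{j_2}\big)=\sqrt{M_{j_1}}\;{\rm Span}\big(\{U_{j_1}^\dagger\ketbra{j_2}{j_2}U_{j_1}\}_{j_2}\big)\;\sqrt{M_{j_1}}\ \ni\ M_{j_1}.
\end{equation*}
Taking the union over $j_1$ and using $\sum_{j_1}M_{j_1}=I$ together with $M_{j_1}-M_{j_1'}=p(\ketbra{j_1}{j_1}-\ketbra{j_1'}{j_1'})$, the span of all $d^2$ collective effects contains ${\rm Span}\{\ketbra{k}{k}\}_k$ along with the $\sqrt{M_{j_1}}$-congruence images of the rotated-basis projectors. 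I would then invoke the known tomographic fact (Refs.~\cite{wootters-1989,komisarski-2013}) that there exists a set $\{I,U_1,\dots,U_d\}$ with ${\rm Span}\big(\{\ketbra{k}{k}\}_k\cup\{U_k^\dagger\ketbra{l}{l}U_k\}_{k,l}\big)={\cal L}({\cal H}_d)$, take this set, and conclude through Theorem~\ref{theorem}.

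The step I expect to be the main obstacle is the linear-algebra bookkeeping above: one must verify that the congruences by $\sqrt{M_{j_1}}$ do not shrink the relevant spans, i.e.\ that after separating out ${\rm Span}\{\ketbra{k}{k}\}_k$ the images $\sqrt{M_{j_1}}\,{\rm Span}\{U_{j_1}^\dagger\ketbra{j_2}{j_2}U_{j_1}\}\,\sqrt{M_{j_1}}$ still deliver the missing directions of ${\cal L}({\cal H}_d)$. The most transparent way to secure this is to pick the $U_k$ so that $U_k\ket{k}$ is a basis vector: then $M_{j_1}$, hence $\sqrt{M_{j_1}}$, belongs to the commutative algebra spanned by $\{U_{j_1}^\dagger\ketbra{j_2}{j_2}U_{j_1}\}_{j_2}$, the congruence maps that algebra onto itself, and ${\rm Span}\big(\{E_{j_1 j_2}\}_{j_2}\big)={\rm Span}\{U_{j_1}^\dagger\ketbra{j_2}{j_2}U_{j_1}\}_{j_2}$, which reduces everything to the spanning condition of Refs.~\cite{wootters-1989,komisarski-2013} and only requires checking that this condition is compatible with the extra normalization $U_k\ket{k}\in\{\ket{1},\dots,\ket{d}\}$. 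Alternatively one argues by genericity: the span dimension of the $d^2$ effects is maximal off a proper algebraic subset of the tuples $(U_1,\dots,U_d)$, so it suffices to exhibit a single spanning configuration.
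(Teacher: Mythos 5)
Your outline tracks the paper's own route up to its decisive step: the same expression $E_{j_1 j_2}=\frac{1-p}{d}M_{j_1}+p\sqrt{M_{j_1}}\,U_{j_1}^\dagger\ketbra{j_2}{j_2}U_{j_1}\sqrt{M_{j_1}}$ with $M_k=\frac{1-p}{d}I+p\ketbra{k}{k}$, the observation that for fixed $j_1$ the span of $\{E_{j_1j_2}\}_{j_2}$ is the $\sqrt{M_{j_1}}$-congruence image of ${\rm Span}\{U_{j_1}^\dagger\ketbra{j_2}{j_2}U_{j_1}\}_{j_2}$ and contains $M_{j_1}$, recovery of the diagonal span from the marginals, and the appeal to Refs.~\cite{wootters-1989,komisarski-2013}. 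You are also right that the congruence step is the crux: the paper simply asserts that conjugation by $\sqrt{M_{j_1}}$ leaves the span of the rotated projectors unchanged, which cannot hold verbatim for, e.g., mutually unbiased $U_{j_1}$ (the left-hand span always contains $M_{j_1}$, the right-hand one then does not). The trouble is that your proposed repair provably fails. If you demand $U_k\ket{k}\in\{\ket{1},\dots,\ket{d}\}$, then $\ketbra{k}{k}\in A_k:={\rm Span}\{U_k^\dagger\ketbra{l}{l}U_k\}_{l}$, hence $M_k$ and $\sqrt{M_k}$ lie in the commutative algebra $A_k$ and every effect $E_{kj_2}$ lies in $A_k$ as well. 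The span of all $d^2$ effects is then contained in $\sum_{k=1}^d A_k$, a sum of $d$ subspaces of dimension $d$ that all contain $I$ (and that already contain every $\ketbra{k}{k}$, so the marginal information adds nothing), whence its dimension is at most $1+d(d-1)=d^2-d+1<d^2$. So the normalization is not a condition ``to be checked for compatibility'' with the Wootters--Fields property: it is incompatible with informational completeness for any choice of unitaries and any $p$. The tilt produced by $\sqrt{M_{j_1}}$ --- equivalently, the fact that $M_{j_1}$ is \emph{not} in the conditional algebra --- is precisely what lets the scheme beat the $d^2-d+1$ bound; compare the qubit examples, where the two conditional spans are ${\rm Span}\{I-p\sigma_z,\sigma_x\}$ and ${\rm Span}\{I+p\sigma_z,\sigma_y\}$ rather than two unital commutative algebras.

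Your genericity fallback does not close the gap either: to know that the Gram determinant of the $d^2$ effects is not identically zero on $U(d)^{\times d}$ you must exhibit at least one spanning tuple in every dimension $d$, and that existence statement is exactly what Proposition~\ref{proposition-qudits} asserts; the explicit $d=2$ and $d=3$ cases do not supply it for general $d$. So the proposal is incomplete at its key step. A genuine completion has to control the distorted subspaces $\sqrt{M_{j_1}}\,{\rm Span}\{U_{j_1}^\dagger\ketbra{j_2}{j_2}U_{j_1}\}_{j_2}\sqrt{M_{j_1}}$ themselves --- e.g., by proving linear independence of the $d^2$ effects directly for a concrete family of unitaries such as the mutually unbiased ones --- rather than by arranging for the congruence to disappear, since neutralizing it necessarily destroys completeness.
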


Let us illustrate our construction by the following example for qutrits ($d=3$). 

\begin{example}
The instrument~\eqref{instrument-d-level} has 3 outcomes. We choose unitary matrices $\{U_k\}_{k=1}^3$ in such a way that $\{U_k^{\dag}\}_{k=1}^3$ are transition matrices from the basis $\{\ket{k}\}_{k=1}^3$ to three more mutually unbiased bases, namely,
\begin{equation*}
    U_1^{\dag} = \frac{1}{\sqrt{3}}\begin{pmatrix}
    1 & 1 & w \\
    1 & w & w^2 \\
    1 & w^2 & w
    \end{pmatrix}, \quad U_2^{\dag} = \frac{1}{\sqrt{3}}\begin{pmatrix}
    1 & 1 & 1 \\
    w & w^2 & 1 \\
   w & 1 & w^2
    \end{pmatrix}, \quad U_3^{\dag} = \frac{1}{\sqrt{3}}\begin{pmatrix}
    1 & 1 & 1 \\
    w^2 & 1 & w \\
   w^2 & w & 1 
    \end{pmatrix},
\end{equation*}

\noindent where $w = \exp(i\frac{2\pi}{3})$~\cite{wootters-1989,MUB}. 

It is straightforward to verify that all the effects $\{E_{j_1 j_2}\}_{j_1,j_2 =1}^3$ are linearly independent if $0< p < 1$. The optimal experimental implementation corresponds to the minimal condition number of the Gram matrix, which equals $\Lambda \approx 17$ and is achieved at $p \approx 0.69$.
\end{example}

\section{Sequential dichotomic measurements for $n$-qubit systems} \label{section-n-qubits}

Consider a composite system composed of $n$ qubits, i.e., $d=2^n$, and a dichotomic informationally incomplete measuring apparatus ($m=2$). By Proposition~\ref{proposition-minimal-number}, in order to provide an informationally complete statistics the measuring device should be used $N \geqslant 2n$ times. The minimal depth $N=2n$ of the corresponding binary measurement tree is sufficient, indeed, as the following construction shows. 

Suppose $\left\{ {\cal I}_1, {\cal I}_2 \right\}$ is a dichotomic instrument for a single qubit such that ${\rm Span} \left( \left\{ {\cal I}_{j_1}^{\dag} \left[{\cal I}_{j_2}^{\dag}[I] \right] \right\}_{j_1,j_2=1,2} \right) = {\cal L}({\cal H}_2)$, for instance, an instrument from Example~\ref{example-qubit-Kraus-rank-2} or Example~\ref{example-qubit-Kraus-rank-1}. Consider the following instrument for $n$ qubits:
\begin{equation} \label{instrument-with-shift}
\widetilde{\cal I}_k [\rho] = U_{\rm shift} \left( {\cal I}_k \otimes {\rm Id} \otimes \ldots \otimes {\rm Id} [\rho] \right) U_{\rm shift}^{\dag}, \quad k=1,2,
\end{equation}

\noindent where the identity transformation ${\rm Id}: {\cal L}({\cal H}_2) \to {\cal L}({\cal H}_2)$ appears $n-1$ times,
\begin{equation*}
U_{\rm shift} = \sum_{i_1,\ldots,i_n = 1}^{2}
\ket{i_n}\bra{i_1} \otimes \ket{i_1}\bra{i_2} \otimes
\ket{i_2}\bra{i_3} \otimes \ldots \otimes \ket{i_{n-1}}\bra{i_n}
\end{equation*}

\noindent is a unitary operator shifting the particles (used in Ref.~\cite{lvgf-2019}), and $\{\ket{1},\ket{2}\}$ is an orthonormal basis for a single qubit. It is not hard to see that $n$ sequential applications of the instrument~\eqref{instrument-with-shift} lead to the effects
\begin{equation*}
E_{j_1 j_2 \ldots j_{2n}} = {\cal I}_{j_1}^{\dag} \left[{\cal I}_{j_{n+1}}^{\dag}[I] \right] \otimes {\cal I}_{j_2}^{\dag} \left[{\cal I}_{j_{n+2}}^{\dag}[I] \right] \otimes \ldots \otimes {\cal I}_{j_n}^{\dag} \left[{\cal I}_{j_{2n}}^{\dag}[I] \right],
\end{equation*}

\noindent where $I: {\cal H}_2 \to {\cal H}_2$. Obviously, ${\rm Span}\left( \{E_{j_1 j_2 \ldots j_{2n}} \}_{j_1,j_2,\ldots,j_{2n} = 1,2} \right) = {\cal L}({\cal H}_{2^n})$ because ${\rm Span} \left( \left\{ {\cal I}_{j_1}^{\dag} \left[{\cal I}_{j_2}^{\dag}[I] \right] \right\}_{j_1,j_2=1,2} \right) = {\cal L}({\cal H}_2)$. Consequently, by Theorem~\ref{theorem} we deduce the informational completeness of $2n$ uses of the instrument~\eqref{instrument-with-shift}.

The results of this section naturally generalize to more complicated composite systems consisting of $d$-level systems. The  measurement tree has depth $N'n$, where $n$ is the number of $d$-level systems under study and $N'$ is the number of sequential measurements sufficient for tomography of a single $d$-level system.

\section{Conclusions} \label{section-conclusions}

We considered a non-destructive measuring apparatus that leaves the system state available for further analysis after the measurement. Even if the measurement is informationally incomplete, it may happen that $N$ sequential uses of the same apparatus do provide informationally complete statistics. We fully characterized those measuring apparatuses in Theorem~\ref{theorem} by using a dual map to the corresponding quantum instrument. In Propositions~\ref{proposition-projective} and~\ref{proposition-Luders} we showed that projective measurements with any number of outcomes and L\"{u}ders measurements with 2 outcomes fail in satisfying the requirement of Theorem~\ref{theorem} and, therefore, cannot provide informational completeness regardless of the number of uses. In Proposition~\ref{proposition-minimal-number} we found a lower bound on how many times the measuring apparatus is to be used to make the informational completeness feasible. This lower bound is shown to be achievable for dichotomic qubit measurements in Examples~\ref{example-qubit-Kraus-rank-2} and~\ref{example-qubit-Kraus-rank-1} as well as for $d$-outcome measurements for $d$-level systems (Proposition~\ref{proposition-qudits}). The obtained results were generalized to composite systems in Section~\ref{section-n-qubits}.

\section{Acknowledgements}

The authors thank Teiko Heinosaari for useful comments and bringing Refs.~\cite{haapasalo-2016,busch-1989,komisarski-2013} to our attention.

%%%%%%%%%%%%%%%%%%%%%%%%%%%%%%%%%%%%%%%%%%%%%%%%%%%%%%%

\end{document}